\newtheorem{remark}{\bf Remark}
\newtheorem{assumption}{\bf Assumption}
\newtheorem{lemma}{\bf Lemma}
\newtheorem{prop}{\bf Proposition}
\newtheorem{model}{\bf Model}
\newcommand{\used}{R_u}
\newcommand{\outcome}{Y_u}
\newcommand{\effect}{\tau_u}
\newcommand{\control}{c_u}
\newcommand{\istreated}[1]{Z_{#1}}
\newcommand{\whole}{k}
\newcommand{\userN}{N}
\newcommand{\E}{\mathbb E}
\renewcommand{\t}{T}
\renewcommand{\c}{C}
\newcommand{\reallywidehat}{\widehat}
\renewcommand{\bar}{\overline}
\renewcommand{\hat}{\widehat}
\newcommand{\ATE}{\Delta}
\newcommand{\double}{\textrm{double}}
\newcommand{\scaled}{\textrm{scaled}}
\newcommand{\jack}{\textrm{jack}}
\newcommand{\trajectory}[1]{}
\begin{document}
\title{On Heavy-user Bias in A/B Testing}

\author{Yu Wang}
\orcid{0000-0002-5329-7739}
\affiliation{%
  \institution{UC Berkeley}
  \city{Berkeley}
  \state{California}
  \country{U.S.}
}
\email{wang.yu@berkeley.edu}

\author{Somit Gupta}
\affiliation{%
  \institution{Microsoft}
  \city{Redmond}
  \state{Washington}
  \country{U.S.}
}
\email{Somit.Gupta@microsoft.com}

\author{Jiannan Lu}
\orcid{0000-0002-8839-6024}
\affiliation{%
  \institution{Microsoft}
  \city{Redmond}
   \state{Washington}
  \country{U.S.}}
\email{jiannl@microsoft.com}

\author{Ali Mahmoudzadeh}
\affiliation{%
  \institution{Microsoft}
  \city{Redmond}
   \state{Washington}
  \country{U.S.}
}
\email{Ali.Mahmoudzadeh@microsoft.com}
\author{Sophia Liu}
\affiliation{%
 \institution{Microsoft}
 \city{Redmond}
 \state{Washington}
 \country{U.S.}}
\email{Sophia.Liu@microsoft.com}
\renewcommand{\shortauthors}{Y. Wang et al.}

\begin{abstract}
On-line experimentation (also known as A/B testing) has become an integral part of software development. To timely incorporate user feedback and continuously improve products, many software companies have adopted the culture of agile deployment, requiring online experiments to be conducted and concluded on limited sets of users for a short period. While conceptually efficient, the result observed during the experiment duration can deviate from what is seen after the feature deployment, which makes the A/B test result biased. In this paper, we provide theoretical analysis to show that heavy-users can contribute significantly to the bias, and propose a re-sampling estimator for bias adjustment. 
\end{abstract}

%
%

\keywords{External validity,
jackknife,
block bootstrap,
causal inference}
\maketitle

\section{Introduction}
\label{sec:intro}

A/B tests or online controlled experiments are used  by a large number of software and technology companies \cite{hohnhold2015focusing,kharitonov2017learning} 
to evaluate changes to web services, desktop and mobile applications, and operating systems.
In a typical online controlled experiment that is evaluating a new feature, users are randomly assigned to the treatment group (exposed to the new feature) or the control group (not exposed to the new feature) as they come online to use the software product or service. The assignment remains consistent throughout the experiment. During the experiment period, we collect telemetry from all users and compute metric for both groups. We  conduct statistical tests to detect differences in metrics values between the treatment and control groups which are unlikely to be observed due to random chance. This establishes a causal relationship between the feature being tested and the measured changes in user behavior \cite{rubin2008objective,imbens2015causal}.  

One key touchstone of trustworthiness of experimentation is external validity 
\cite{Campbell1957,campbell2015experimental,cook2002experimental,sabbaghi2018model} -- can results observed during an experiment period still hold when the new feature being tested is rolled out to the entire user population in the future? There can be multiple factors that affect external validity, such as the novelty effect and the weekday/weekend effect. While such factors are well recognized, there could be other neglected yet common effects that play an important role in determining external validity. In this paper, we highlight \emph{heavy-user bias}, which describes the phenomenon that frequent users are more likely to be included in an experiment than infrequent ones, rendering the estimated average treatment effect biased. To our best knowledge, it has not been formally discussed in the existing data mining literature, and we hope that this paper can raise the community's awareness on this important issue.

The remainder of the paper is organized as follows. Section \ref{sec:framework} discusses the concepts of external validity and heavy user bias in more depth and introduces necessary notations and assumptions. Section \ref{sec:methodology} derives the closed-form expression of the heavy-user bias, proposes a bias-adjusted estimator based on jackknife \cite{Quenouille56,Tukey1958,Rao1965AMethod,Miller74,Kunsch1989}, and illustrates the performance of the bias-adjusted estimator via simulated examples. Section \ref{sec:conclusion} concludes the paper by summarizing the progresses made in this on-going project, and discussing practical challenges and future research directions.

\section{Preliminaries}
\label{sec:framework}

\subsection{External validity and heavy user bias}
External validity, also known as generalizability \cite{Stuart11}, refers to the problem generalizing the findings from the sample units included in the experiment to a larger inference population. External validity is an important problem in causal inference and several papers studied the external validity under a variety of different scenarios such as politics \cite{Stuart11} and education \cite{Tipton14a}. 
In the context of A/B tests, the external validity of A/B tests could be affected by a variety of factors, such as novelty/primacy effects \cite{Sheinin11} or weekday/weekend effects.

Heavy user bias is another important yet often overlooked factor that affects the external validity of A/B tests. To illustrate what is heavy user bias, let us consider a simple example. Suppose a website has two hundred users, half are heavy users who use the website every day, and the other half are light users who use the website with 50\% probability each day.  However, if an online experiment is run for just one day, there would be around 150 users using the website.  The proportion of heavy users in the experiment sample will be 2/3. In other words, the proportion of heavy users for any A/B test is usually higher than that for the whole population. One simple way to mitigate heavy user bias would be to run the experiment long enough so that the proportion of heavy users and light users remain stable. However, under mild conditions, we show that the heavy user bias is usually inversely proportional to the length of the experiment $k$. This means when the experiment duration doubles, the heavy user bias is only reduced by half. Furthermore, long-term experiments are known to be prone to other critical issues \cite{pitfalls-long-term-online-experiments}. Therefore,  running experiments for longer periods might not be practical. 
%
%
%
%


\subsection{Notations and assumptions}
 
\begin{table}[!h]
\centering
\begin{tabular}{ll}
\hline 
Notation & Explanation \\
\hline
\(\outcome(t)\) &  the observed outcome of user \(u\) at day \(t\).\\
\(\istreated{u}\) & whether the user \(u\) is in the treatment group.\\
\(\used(t)\) & whether the user \(u\) used the product at day \(t\).\\
\(t_u^0\) & the first time user $u$ shows up, i.e. $\min\{t \mid \used(t) = 1\}$\\
\(\effect(t)\) & the treatment effect for user \(u\) at day \(t\).\\
\(\control(t)\) & the control outcome for user \(u\) at day \(t\).\\
\(\whole\) & duration of the experiment (day 1 to day \(\whole\)).\\
\(\userN_\t,\userN_\c\)& number of users in the treatment/control group.\\
\hline
\end{tabular}

\end{table}
The table below describes the notation we will use for the rest of the paper.
To derive our theoretical result, we need to make the following assumptions. We will briefly comment why the introduced assumptions are reasonable in our application scenarios, and in Section \ref{sec:conclusion} we will discuss how to further relax the assumptions. 

\begin{assumption}
[stable unit treatment value assumption]
\label{assum:sutva}
One user's outcome is unaffected by other users' treatment assignments. In other words, different users do not interfere with each other.
\end{assumption}

\begin{assumption}[super population] For each user, its behavior can be characterized as a series of triplets $\{\used(t), \effect(t), \control(t)\}_{t=1}^\whole$. We assume that this series for each user is an i.i.d. sample from a super population with a probability distribution $\Psi$:
\begin{equation}
\label{eq:dgp}
P\{\used(t)=a_t, \effect(t)\leq b_t, \control(t)\leq c_t;t=1,\ldots,\whole\},
\end{equation}
where $a_t\in \{0, 1\}$, $b_t, c_t\in \mathbb R$ for $t\in \{1,\ldots, \whole\}$.
\end{assumption}



\begin{assumption}[incremental experiment assumption]
\label{assum:user-activity-unchanged}
We assume that for each user $u$, the activity indicator $\used(t)$ is independent of the treatment assignment $\istreated u$.
\end{assumption}

\begin{remark}

Under Assumption \ref{assum:sutva}, the outcome of any user depends only on its own treatment assignment but not other users' treatment assignment. This assumption is reasonable when users do not interact with each other, e.g., users of search engines and operating systems. However, it could break for users that can interact and communicate, such as users of social media. The latter scenarios are beyond the scope of this paper.

Assumption \ref{assum:user-activity-unchanged} implies that user's visit  \(\used(t)\) is not affected by whether a user is treated. In other words, we assume our experiment is incremental such that it does not change the frequency of users' visits.  This assumption could bring issues if a treatment significantly moves the number of days a user is active (i.e. uses the product). Therefore, 
it is important to test this assumption before analyzing the data using this framework. 
One plausible way would be to test whether the average active days per user is the same across treatment and control group. Based on our experience, most experiments do not affect the frequency of users' visits significantly \cite{Kohavi2014}.

Under Assumptions \ref{assum:sutva}--\ref{assum:user-activity-unchanged}, we can express the observed outcomes of the experimental units as
$$
\outcome(t) 
= \used(t) 
\{
\istreated{u} \effect(t) + \control(t)
\},
$$
which greatly facilitates the theoretical derivations going forward. 
\end{remark}


\section{Methodology}
\label{sec:methodology}

\subsection{Metric and point estimation}
At the end of an A/B test, we compute metrics to estimate the impact of the treatment on user behavior and make ship decisions. For example, click-through rate (CTR) is a common metric for search engines to measure the effectiveness of online advertisements. In this paper, we focus on the \emph{scaled single average}
$$
\frac{1}{k}
\E 
\left\{
\sum_{t=1}^\whole\outcome(t)\Big| \istreated u = z
\right\}
\quad
(z=1, 0)
$$
which are arguably the most common metric type in A/B testing\footnote{We will discuss other types of metrics in Section \ref{sec:conclusion}}. Note that the expectation is calculated with respect to the data generating mechanism in \eqref{eq:dgp}, which means it is the average over all the users of the product. For this metric, the average treatment effect (ATE) is the difference between the metric for the treatment group and that for the control group:
\begin{equation}
\label{Eq:estimand_delta}
\ATE_\scaled
= 
\E \left\{\frac{1}{\whole}\sum_{t=1}^\whole\used(t)\effect(t)\right\}.
\end{equation}
We can estimate $\ATE_\scaled$ by the corresponding difference-in-sample-means derived from the observed data:
\begin{equation}
\label{eq:naive-estmator}
\reallywidehat {\ATE}_\scaled 
= 
\frac{
\sum_{u:\istreated{u}=1} \sum_{t=1}^\whole\outcome(t)
}
{
\whole\userN_\t
} 
- 
\frac{
\sum_{u:\istreated{u}=0} \sum_{t=1}^\whole\outcome(t)
}
{
\whole\userN_\c
}.
\end{equation}
Note that, in this equation, we only include $\userN_\t + \userN_\c$ users that appear during the experiment.

\subsection{Heavy-user bias}
We define the \emph{heavy-user bias} of the estimator $\reallywidehat {\ATE}_\scaled$ estimating $\ATE_\scaled$ as the difference between the expected value of the estimator and the estimand: 
$
\E (\reallywidehat {\ATE}_{\scaled}) 
- 
\ATE_\scaled$. 
 
Because only users who appear between day 1--$\whole$ are included in the experiment, the expectation of the point estimate of $\reallywidehat {\ATE}_\scaled$ is:
\begin{equation}
\label{Eq:expected_ATE_c}
\E 
\left(
\reallywidehat {\ATE}_\scaled
\right)
= 
\E
\left\{
\frac{1}{\whole}\sum_{t=1}^\whole \effect(t)\used(t)\Big| t_u^0 \leq \whole
\right\},
\end{equation}

The \emph{heavy-user bias} is a (potentially complex) function of the data generating process in \eqref{eq:dgp}. To make the problem somewhat tractable and concrete, we propose a straightforward yet practical user behavior model, under which we derive the closed-form expressions of the heavy-user bias. 
We assume a fixed population, within which there exists user heterogeneity for both heavy and low activity frequencies and outcomes. 

\begin{model}[Fixed population with user heterogeneity] 
\label{model:1}
We use the following model to reflect the heterogeneity on both user activity frequencies and outcomes:
\begin{itemize}
\item $\used(t)$ for a user $u$ on day $t$ is i.i.d. from a Bernoulli random variable with success probability $p\sim f(\cdot)$. 
\item The expectation of the treatment effect for user $u$ is $\E \effect(t) = \tau(p)$. It implies that the treatment effect could be different for users with different activity parameter $p$ but remains the same across all days.
\item Similarly, the expected control outcome of a user is $\E \control(t) = c(p)$.
\end{itemize}
\end{model} 
As demonstrated in the following lemma, Model \ref{model:1} allows us to derive the closed-form expressions for both $\ATE_\scaled$ and $\E(\reallywidehat {\ATE}_{\scaled}),$ and therefore rigorously quantify the heavy-user bias. 

\begin{lemma}
\label{lemma:1}
Under Assumptions \ref{assum:sutva}--\ref{assum:user-activity-unchanged} and Model \ref{model:1},
\begin{align}
\label{Eq:ATE_model_1}
\ATE_\scaled = &\int_{0}^1\tau(p) p f(p)dp,
\end{align}
and
\begin{align}
\label{Eq:3}
\E\left(\reallywidehat {\ATE}_{\scaled}\right) 
=& \dfrac{
\int_{0}^1\tau(p) p f(p)
dp
}
{
\int_{0}^1f(p)
\left\{
1 - (1 - p)^{\whole}
\right\}
dp
}.
\end{align}
\end{lemma}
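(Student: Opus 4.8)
The plan is to substitute Model \ref{model:1} into the two expectation formulas already in hand — \eqref{Eq:estimand_delta} for $\ATE_\scaled$ and \eqref{Eq:expected_ATE_c} for $\E(\reallywidehat{\ATE}_\scaled)$ — and in each case evaluate the expectation by first conditioning on the user's latent activity parameter $p$ and then integrating against $f$.

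For \eqref{Eq:ATE_model_1}, I would condition on $p$. Under Model \ref{model:1} the indicators $\used(1),\ldots,\used(\whole)$ are i.i.d.\ $\mathrm{Bernoulli}(p)$ and, conditionally on $p$, independent of $\effect(t)$, whose conditional mean is $\tau(p)$. Hence $\E\{\used(t)\effect(t)\mid p\}=p\,\tau(p)$ for every $t$, so $\E\{\whole^{-1}\sum_{t=1}^{\whole}\used(t)\effect(t)\mid p\}=p\,\tau(p)$, and taking the expectation over $p\sim f$ in \eqref{Eq:estimand_delta} yields \eqref{Eq:ATE_model_1}.

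For \eqref{Eq:3}, note first that the conditioning event $\{t_u^0\le\whole\}$ is exactly $\{\sum_{t=1}^{\whole}\used(t)\ge 1\}$, which given $p$ has probability $1-(1-p)^{\whole}$; by Bayes' rule the conditional density of $p$ on this event is proportional to $f(p)\{1-(1-p)^{\whole}\}$, with the denominator displayed in \eqref{Eq:3} as the normalizing constant. It then remains to compute $\E\{\whole^{-1}\sum_{t=1}^{\whole}\effect(t)\used(t)\mid p,\ t_u^0\le\whole\}$. Since $\effect(t)$ is independent of the activity indicators given $p$ with mean $\tau(p)$, this equals $\whole^{-1}\tau(p)\,\E\{M\mid p, M\ge 1\}$, where $M=\sum_{t=1}^{\whole}\used(t)\sim\mathrm{Binomial}(\whole,p)$. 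Because $M$ vanishes off $\{M\ge 1\}$, $\E\{M\mid p, M\ge 1\}=\E(M\mid p)/\Pr(M\ge 1\mid p)=\whole p/\{1-(1-p)^{\whole}\}$. Substituting this and integrating against the conditional density of $p$, the factor $1-(1-p)^{\whole}$ cancels between the integrand and the Bayes weight, leaving \eqref{Eq:3}.

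The one delicate point — and the step I expect to be the main obstacle — is the second computation: conditioning on $\{t_u^0\le\whole\}$ destroys the independence of the activity indicators (at least one must equal one), so the expectation cannot be taken term by term over $t$. The clean route is the one above: condition on $p$ first, peel off $\effect(t)$ using its conditional independence, and reduce the activity part to the truncated-binomial mean $\E(M\mid M\ge 1)$. Once that identity is established, the cancellation of the survival factor $1-(1-p)^{\whole}$ makes the final ratio fall out, and the remaining manipulations are routine.
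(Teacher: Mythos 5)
Your proposal is correct and follows essentially the same route as the paper: condition on the latent $p$, obtain the posterior density $f(p\mid t_u^0\le \whole)\propto f(p)\{1-(1-p)^{\whole}\}$ by Bayes' rule, and cancel the survival factor against the conditional mean activity. The only cosmetic difference is that the paper collapses the time average to the single term $\E\{\effect(1)\used(1)\mid t_u^0\le\whole\}$ by symmetry over days and computes $P(\used(1)=1\mid p,\,t_u^0\le\whole)=p/\{1-(1-p)^{\whole}\}$, whereas you compute the truncated-binomial mean $\E(M\mid p,\,M\ge 1)=\whole p/\{1-(1-p)^{\whole}\}$; these are the same identity (and, contrary to your worry, term-by-term expectation is still valid after conditioning, by linearity).
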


\begin{proof}
First, \eqref{Eq:ATE_model_1} holds by the definition of Model \ref{model:1}. Second, based on \eqref{Eq:expected_ATE_c},
\begin{align}
\E
\left(
\reallywidehat {\ATE}_{\scaled} 
\right) 
= &\E\left\{\effect(1)\used(1)\Big| t_u^0 \leq \whole\right\}\\
= &\int_0^1\tau(p)\frac{p}{1 - (1-p)^\whole}f(p| t_u^0 \leq \whole)dp, \label{Eq:1}
\end{align}
where $f(p|t_u^0\leq \whole)$ is the density of the user activity probability $p$ conditioned on $t_u^0\leq \whole$. Using the Bayes' formula, we have
\begin{align}
f(p|t_u^0\leq \whole) = & \frac{f(p) P(t_u^0\leq \whole|p)}{\int_0^1f(p) P(t_u^0\leq \whole|p)dp}
= \frac{f(p) (1 - (1 - p)^\whole)}{\int_0^1f(p) (1 - (1 - p)^\whole)dp}. \label{Eq:2}
\end{align}
By \eqref{Eq:1} and \eqref{Eq:2}, we complete the proof.
\end{proof}

\begin{remark}
Intuitively, the heavy-user bias arises because light users are less likely to show up during the experiment and therefore are under-represented. If we run the experiment long enough ($\whole\to\infty$), then
\begin{equation*}
\lim_{\whole\to\infty}
\E
\left(
\reallywidehat {\ATE}_{\scaled}
\right) 
= 
\int_{0}^1\tau(p) p f(p)dp 
= 
\ATE_\scaled.
\end{equation*}
However, for a finite period $\whole$, a user with activity probability $p$ has probability 
$
1 - (1 - p)^\whole
$
to show up during the experiment. That shows the proportion of heavy-users during the experiment are higher than that in the whole population.
\end{remark}

With the help of Lemma \ref{lemma:1}, we can now present the main result of this paper.
\begin{prop}
\label{Prop:model1_single_average} Under Assumptions \ref{assum:sutva}--\ref{assum:user-activity-unchanged} and Model \ref{model:1}, if $f(\cdot), \tau(\cdot)$ has gradient and their gradients are continuous, then
\begin{equation*}
\E
\left(
\reallywidehat {\ATE}_{\scaled}
\right) 
- \ATE_\scaled
= 
\ATE_\scaled f(0)\cdot\whole^{-1} + O(\whole^{-2}).
\end{equation*}
\end{prop}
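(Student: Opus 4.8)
The plan is to start from the closed form in Lemma~\ref{lemma:1} and reduce the claim to a single Laplace-type integral estimate. Write $D_\whole := \int_0^1 f(p)\{1-(1-p)^\whole\}\,dp$ for the denominator appearing in \eqref{Eq:3}. Since the numerator there equals $\ATE_\scaled$ exactly, Lemma~\ref{lemma:1} gives $\E(\reallywidehat{\ATE}_{\scaled}) = \ATE_\scaled/D_\whole$, so
$$
\E(\reallywidehat{\ATE}_{\scaled}) - \ATE_\scaled
= \ATE_\scaled\,\frac{1-D_\whole}{D_\whole}.
$$
Put $\varepsilon_\whole := 1 - D_\whole = \int_0^1 f(p)(1-p)^\whole\,dp \ge 0$. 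Then the proposition follows once I establish (i) $\varepsilon_\whole = f(0)\,\whole^{-1} + O(\whole^{-2})$ and (ii) $1/D_\whole = 1 + O(\whole^{-1})$, because then $\E(\reallywidehat{\ATE}_{\scaled}) - \ATE_\scaled = \ATE_\scaled\,\varepsilon_\whole/D_\whole = \ATE_\scaled f(0)\whole^{-1} + O(\whole^{-2})$, using that $\ATE_\scaled = \int_0^1 \tau(p)\,p\,f(p)\,dp$ is a fixed finite constant (both $\tau$ and $f$ are continuous on the compact set $[0,1]$, so $\ATE_\scaled\cdot O(\whole^{-2})=O(\whole^{-2})$).

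Claim (ii) is the easy half: from (i), $\varepsilon_\whole \to 0$, hence $D_\whole \to 1$ and $1/D_\whole - 1 = \varepsilon_\whole/D_\whole = O(\varepsilon_\whole) = O(\whole^{-1})$. The content is claim (i), which I would treat as a standard Laplace/Watson estimate: as $\whole \to \infty$ the weight $(1-p)^\whole$ concentrates at $p = 0$, so only the local behaviour of $f$ at $0$ survives. Concretely, write the first-order Taylor expansion with integral remainder, $f(p) = f(0) + f'(0)\,p + R(p)$ with $R(p) = \int_0^p\{f'(s)-f'(0)\}\,ds$, and integrate term by term against $(1-p)^\whole$ using the Beta integrals $\int_0^1 p^j(1-p)^\whole\,dp = j!\,\whole!/(\whole+j+1)!$. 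The first two terms contribute $f(0)/(\whole+1) + f'(0)/\{(\whole+1)(\whole+2)\}$, and $(\whole+1)^{-1} = \whole^{-1} + O(\whole^{-2})$, which already yields the leading term $f(0)\whole^{-1}$ plus an $O(\whole^{-2})$ correction.

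The one point that needs care — and the main obstacle — is verifying that the remainder integral $\int_0^1 R(p)(1-p)^\whole\,dp$ is genuinely $O(\whole^{-2})$, not merely $o(\whole^{-1})$, under the hypothesis that $f$ is only $C^1$ (no second derivative is assumed). The resolution is that a pointwise bound $|R(p)| = O(p^2)$ is not needed: the crude estimate $|R(p)| \le p\sup_{s\in[0,1]}|f'(s)-f'(0)|$, valid because $f'$ is continuous and hence bounded on $[0,1]$, already suffices, since $\int_0^1 p(1-p)^\whole\,dp = \{(\whole+1)(\whole+2)\}^{-1} = O(\whole^{-2})$. Assembling the three contributions gives (i), and combining (i) with (ii) completes the proof. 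Equivalently, one may substitute $q = 1-p$ to write $\varepsilon_\whole = \int_0^1 f(1-q)\,q^\whole\,dq$ and apply Watson's lemma at the endpoint $q = 1$; the Beta-integral computation above is more elementary and self-contained.
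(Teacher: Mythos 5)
Your proposal is correct and takes essentially the same route as the paper's proof: both reduce the claim to the Laplace-type estimate $\int_0^1 f(p)(1-p)^\whole\,dp = f(0)/(\whole+1) + O(\whole^{-2})$, obtained from the bound $|f(p)-f(0)|\le Cp$ (boundedness of $f'$ on $[0,1]$) together with the Beta integral $\int_0^1 p(1-p)^\whole\,dp = \{(\whole+1)(\whole+2)\}^{-1}$, and then expand the reciprocal of the denominator in Lemma~\ref{lemma:1}. Your version is slightly more explicit about the final $\varepsilon_\whole/D_\whole$ step, which the paper leaves implicit, but the substance is identical.
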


\begin{proof}
Denote $f^\prime$ to be $f$'s gradient. Because $f^\prime$ is continuous on the compact set $[0, 1],$ $f^\prime$ must be uniformly bounded on that set. That implies there exists a positive constant $C > 0$ such that
$
|f(p) - f(0)|\leq C\cdot p
$
for all $p \in [0,1].$ Therefore
\begin{equation*}
\left|\int_{0}^1f(p)(1 - p)^{\whole}dp - \int_{0}^1f(0)(1 - p)^{\whole}dp\right| 
\leq 
\int_{0}^1C\cdot p(1 - p)^{\whole}dp,
\end{equation*}
which implies that
\begin{align}
\left|\int_{0}^1f(p)(1 - p)^{\whole}dp - \frac{f(0)}{\whole+1}\right| \leq \frac{1}{(\whole+1)(\whole+2)} = O(\whole^{-2}).
\label{Eq:proof_sim1_bias_1}
\end{align}
By \eqref{Eq:3} in Lemma \ref{lemma:1} and \eqref{Eq:proof_sim1_bias_1},
$
\E\left\{\reallywidehat {\ATE}_{\scaled}\right\}  
= 
\ATE_\scaled + f(0) \ATE_\scaled \whole^{-1} + O(\whole^{-2}),
$
which completes the proof.
\end{proof}

\begin{remark}
When there is no extremely light users $f(0) = 0$, it can be seen from the proposition that the first order bias of continuous analysis would be zero. Note that if a user has probability zero of showing up, it will never appear in the experiment. $f(0)$ should be thought of as the limit $\lim_{q\rightarrow 0} P(p\leq q)/q,$ which approximately represents users with very light activity. Based on our experience, for many online websites, the proportion of extremely light users is quite significant. 
\end{remark}

\subsection{Bias-adjusted estimator}

Having derived the heavy-user bias in Proposition \ref{Prop:model1_single_average}, we next propose a bias-adjusted estimator to replace the difference-in-means estimator in \eqref{eq:naive-estmator}. Our proposal is inspired by jackknife in classical literature\cite{Tukey1958, Kunsch1989, Miller74}, which usually serves as a generic tool to correct first-order biases. 

For any fixed experiment duration $k$, let $a$ be the true value of  the outcome/metric of interest and $h(k)$ be an estimator of $a$. Assume the the \emph{heavy user bias} of the estimator can be approximated by a function of $\whole$:
$
h(k) - a = b / \whole + O(\whole^{-2}),
$
where $b$ is a parameter. The key insight here is that we can use two points $h(k-1)$ and $h(k)$ to get a better estimate of $a$ with almost no bias of order $O(\whole^{-1})$:
$$
\hat a = k \cdot h(k) - (k - 1) \cdot h(k-1).
$$
For the scaled single average metric, the natural choice for $h(k)$ is the un-adjusted estimator 
$
\reallywidehat {\ATE}_{\scaled}.
$ 
To obtain $h(k-1)$, we can similarly calculate 
$
\reallywidehat {\ATE}_{\scaled}
$
using the data of first $k - 1$ days. Although this estimate is unbiased, it does not use all the data at hand and can suffer from a large variance. To reduce its variance, we repeat the above procedure by excluding data from day $j=1, \ldots, \whole,$ and average the results. We summarize the above procedure in Algorithm \ref{Alg:jack_cont_period}.
\begin{algorithm}[htb]
\small
\caption{Bias-adjusted estimator}
\label{Alg:jack_cont_period}
\begin{algorithmic}[1]
\Require Data $= \{(\used(t), \outcome(t), \istreated{u})\}_{u}$ for days $j=1, \ldots, \whole$. 
\State $\reallywidehat {\ATE} \gets \mathrm{Calculate} \; \reallywidehat {\ATE}_{\scaled}$
\For{$j = 1,\ldots, \whole$}
\State Get new data-set by excluding data on day $j:$
\State $\mathrm{Calculate} \; \reallywidehat {\ATE}_{\scaled}$ on new data-set
\EndFor
\State $\bar \ATE \gets \frac{1}{\whole}\sum_{j} \reallywidehat \ATE_{(j)}$.
\State $\reallywidehat \ATE_{\textrm{jack}} \gets \whole \reallywidehat \ATE - (\whole - 1)\bar \ATE$.
\State \textbf{return} estimated mean $\reallywidehat \ATE_{\textrm{jack}}$ and its variance $\frac{k}{k - 1} \sum_{j=1}^\whole (\reallywidehat \ATE_{(j)} - \bar \ATE)^2$.
\end{algorithmic}
\end{algorithm}

\begin{prop}\label{Prop:jack_1}
Under Assumptions \ref{assum:sutva}--\ref{assum:user-activity-unchanged} and Model \ref{model:1}, the heavy-user bias of the bias-adjusted estimator in Algorithm \ref{Alg:jack_cont_period} is
$$
\E \reallywidehat \ATE_\jack - \ATE_\scaled 
= 
O(\whole^{-2}).
$$
\end{prop}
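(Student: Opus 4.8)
\emph{Proof proposal.}
The plan is to run the classical jackknife bias-cancellation argument, using Lemma~\ref{lemma:1} to evaluate expectations and pushing the expansion behind Proposition~\ref{Prop:model1_single_average} one order further. Write $g(m)=\ATE_\scaled/\int_0^1 f(p)\{1-(1-p)^m\}\,dp$ for the closed form in \eqref{Eq:3}, so that $\E(\reallywidehat\ATE_\scaled)=g(\whole)$ for the full-data estimator. First I would identify $\E(\reallywidehat\ATE_{(j)})$ for the leave-out estimator of Algorithm~\ref{Alg:jack_cont_period}. The point is that under Model~\ref{model:1} the triplets $\{\used(t),\effect(t),\control(t)\}$ are i.i.d.\ across days, so deleting day $j$ and rescaling by $1/(\whole-1)$ produces something distributionally identical to an experiment of length $\whole-1$; moreover, by Assumption~\ref{assum:user-activity-unchanged} the set of units retained in that shortened experiment is determined by activity alone and is independent of treatment, so the derivation of Lemma~\ref{lemma:1} applies verbatim with $\whole$ replaced by $\whole-1$. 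Hence $\E(\reallywidehat\ATE_{(j)})=g(\whole-1)$ for every $j$, and by linearity $\E(\bar\ATE)=g(\whole-1)$ and $\E(\reallywidehat\ATE_\jack)=\whole\,g(\whole)-(\whole-1)\,g(\whole-1)$. (Dependence among the $\reallywidehat\ATE_{(j)}$ plays no role, since only their mean enters.)

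Second I would upgrade Proposition~\ref{Prop:model1_single_average} to a two-term expansion of $g$. Put $E(m)=\int_0^1 f(p)(1-p)^m\,dp$; since $\int_0^1 f=1$ we have $g(m)=\ATE_\scaled\{1+E(m)+E(m)^2+\cdots\}$. The estimate in the proof of Proposition~\ref{Prop:model1_single_average} gives $E(m)=f(0)/(m+1)+O(m^{-2})$; one integration by parts applied to $f(p)-f(0)$, which vanishes at $0$, rewrites the remainder as $\tfrac{1}{m+1}\int_0^1 f'(p)(1-p)^{m+1}\,dp$, and repeating the same estimate on $f'$ yields
$$
E(m)=\frac{f(0)}{m+1}+\frac{f'(0)}{(m+1)(m+2)}+O(m^{-3}).
$$
Substituting into the geometric series and collecting powers of $1/m$ gives $g(m)=\ATE_\scaled+b_1 m^{-1}+b_2 m^{-2}+O(m^{-3})$, with $b_1=\ATE_\scaled f(0)$ (matching Proposition~\ref{Prop:model1_single_average}) and a constant $b_2$ independent of $m$.

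Finally I would substitute: $\whole\,g(\whole)-(\whole-1)\,g(\whole-1)=\ATE_\scaled+(b_1-b_1)+\{b_2\whole^{-1}-b_2(\whole-1)^{-1}\}+O(\whole^{-2})=\ATE_\scaled+O(\whole^{-2})$, because the $b_1$-terms cancel exactly, $b_2\whole^{-1}-b_2(\whole-1)^{-1}=-b_2/\{\whole(\whole-1)\}=O(\whole^{-2})$, and the $O(m^{-3})$ remainders stay $O(\whole^{-2})$ after multiplication by $\whole$ and $\whole-1$; this is the claim. The one genuinely delicate step is the second-order control of $E(m)$: the bound used in Proposition~\ref{Prop:model1_single_average} only exploits $|f(p)-f(0)|\le C\cdot p$, and obtaining a remainder that is truly $O(m^{-3})$ — so that $\whole\cdot O(\whole^{-3})=O(\whole^{-2})$ survives the rescaling — requires a Lipschitz-type bound on $f'$ near $0$ (a bit more than the continuity of $f'$ assumed in Proposition~\ref{Prop:model1_single_average}); with only $f'$ continuous the same argument still delivers the slightly weaker conclusion $\E\reallywidehat\ATE_\jack-\ATE_\scaled=o(\whole^{-1})$. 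Everything else is routine algebra.
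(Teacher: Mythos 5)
Your proposal follows the same jackknife-cancellation route as the paper: identify $\E(\reallywidehat\ATE_{(j)})$ with the expectation of a $(\whole-1)$-day experiment (justified by the i.i.d.-across-days structure of Model~\ref{model:1} and Assumption~\ref{assum:user-activity-unchanged}), then combine the expansions of $g(\whole)$ and $g(\whole-1)$ so the first-order bias terms cancel. Where you differ is in the treatment of the remainder, and your extra care is in fact necessary rather than pedantic. The paper's proof writes $\E\,\reallywidehat\ATE_{(j)}=\ATE_\scaled+\ATE_\scaled f(0)/(\whole-1)+O(\whole^{-2})$ and immediately concludes; but after multiplying by $\whole$ and $\whole-1$, an unstructured $O(\whole^{-2})$ remainder only contributes $O(\whole^{-1})$, and nothing in Proposition~\ref{Prop:model1_single_average} as stated forces those two $O(\whole^{-1})$ contributions to cancel. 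Your two-term expansion $g(m)=\ATE_\scaled+b_1m^{-1}+b_2m^{-2}+O(m^{-3})$, with the \emph{same} constant $b_2$ at $m=\whole$ and $m=\whole-1$, is exactly the structure needed to make $\whole\,g(\whole)-(\whole-1)\,g(\whole-1)=\ATE_\scaled+O(\whole^{-2})$ rigorous, and your integration-by-parts derivation of the second-order coefficient is sound. Your closing caveat is also correct: the stated hypothesis (continuity of $f'$) yields only $o(\whole^{-1})$ for the jackknife bias, and the full $O(\whole^{-2})$ rate requires a Lipschitz condition on $f'$ near $0$ (or a bounded second derivative). So: same strategy as the paper, but your version closes a step the paper glosses over and correctly flags the extra regularity the sharper rate demands.
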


\begin{proof}
After excluding day $j,$ the remaining data can be viewed as from a $\whole - 1$ day experiment. Thus we apply Proposition \ref{Prop:model1_single_average} to obtain the expectation of  
$
\reallywidehat \ATE_{(j)},
$
the difference-in-means estimator on data excluding day $j:$ 
$
\E \reallywidehat \ATE_{(j)} = \ATE_{\scaled} + \frac{1}{\whole - 1}\ATE_{\scaled} f(0) + O(\whole^{-2}).
$
Therefore, the expectation of the bias-adjusted estimator is
\begin{equation*}
\E \reallywidehat \ATE_\jack 
= \whole\E \reallywidehat \ATE_\scaled - \frac{(\whole - 1)}{\whole}\sum_j \E \reallywidehat \ATE_{(j)} = \ATE_{\scaled} + O(\whole^{-2}).
\end{equation*}
\end{proof}

\subsection{Simulated examples}
\label{subsec:simu}
To demonstrate the advantages of the bias-adjusted estimator
$
\reallywidehat \ATE_\jack
$
in Proposition \ref{Prop:jack_1}, we present two simulated examples mimicking real-life A/B testing scenarios\footnote{We provide the source code of the simulations in \url{https://github.com/shifwang/On-Heavy-user-Bias-in-AB-Testing}}. For both examples, the experiment lasts for 14 days, the treatment and control groups each contains 1000 units, and each unit uses the product with probability $p$ for day $t=1, \ldots, \whole$, where $p$ is generated from  $\mathrm{Uniform}(0, 1)$. The difference is how we generate the outcomes. To be specific, if user $u$ uses the product on day $t,$ for Example 1 we let 
$$
\outcome(t) = 
\left\{
\begin{array}{ll}
1 + p + N(0, 0.01^2) & \text{if treated}\\
1 + N(0, 0.01^2) & \text{otherwise}
\end{array}
\right. ,
$$
and for Example 2 we let
$$
\outcome(t) = 
\left\{
\begin{array}{ll}
1 + (1 + \frac{1}{10\cdot U_u(t)})\cdot p + N(0, 0.01^2) & \text{if treated}\\
1 + N(0, 0.01^2) & \text{otherwise}
\end{array}
\right.,
$$
where $U_u(t)$ is the number of days the user $u$ used the product. For both examples, the ground truth 
$
\Delta_\scaled = 1/3.
$
While Example 1 is strictly under Model \ref{model:1}, Example 2 contains the novelty effect, which violates the assumptions in Model \ref{model:1}. By leveraging the two examples, we aim to examine both the accuracy and the robustness of the proposed bias-adjusted estimator. 

For both examples, we repeat the data generation process 100 times. For each simulated data-set, we compute the original difference-in-means estimator 
$
\reallywidehat {\ATE}_{\scaled},
$
the bias-adjusted estimator
$
\reallywidehat {\ATE}_\jack,
$
and the block bootstrap estimator\cite{Kunsch1989}. We report the biases and the standard deviations of the three estimators in Table \ref{Tab:sim_result}. The bias-adjusted estimator produces the smallest bias in both examples. Unfortunately, we do not have an answer why jackknife adjustment works better than bootstrap under our simulated settings.
\begin{table}[tb]
\centering
\footnotesize
\caption{Average biases and their standard errors}
\label{Tab:sim_result}
\vspace{-2mm}
\begin{tabular}{@{}lcc@{}}
\toprule
  & Example 1          & Example 2     \\ 
    \hline
Bias of original estimator  &  0.0220 (0.0023)  & 0.0373 (0.0020) \\
Bias of bias-adjusted estimator &  -0.0022 (0.0026)  & 0.0132 (0.0023)    \\
Bias of block-bootstrap estimator & 0.0080 (0.0025)  & 0.0232 (0.0022)  \\ \bottomrule
\end{tabular}
\end{table}

\section{Concluding Remarks}
\label{sec:conclusion}

In this paper, we highlighted that the heavy-user bias could affect external validity significantly, and would like to raise the awareness of the data mining community on this issue. To be more specific, we demonstrated that the heavy-user bias exists in A/B testing due to the limited length of an experiment, and proposed a bias-adjusted estimator based on jackknife. Under the fixed population user heterogeneity model, we showed that jackknife estimators could correct the first order heavy-user bias. We conducted simulation studies to illustrate the advantages of the proposed methodology.

We summarize two lines of active research we have been conducting on this on-going project. First, besides \eqref{Eq:estimand_delta} there are other types of metrics. For example, double average:
$
\ATE_\double 
=\E
\left\{
\frac{\sum_{t=1}^\whole\used(t)\effect(t)}{\sum_{t=1}^\whole\used(t)}
\right\}.
$
Fortunately, we can leverage the same techniques to derive the heavy-user bias. Second, we applied the bias-adjusted estimator to several empirical data-sets and we found that, comparing with short term A/B testing results, our new estimator is closer to the long term A/B testing results. 

To conclude the paper, we outline several future research directions to achieve the holy grail of ensuring external validity. First, it is important to understand the joint effect of multiple factors that affect external validity. The simulated examples showed that our proposed estimator provided a more accurate estimate in the presence of both the heavy user bias and the novelty effect. However, we might need to consider other possible effects in practice. Second, we can generalize the current methodology to study fairness in A/B testing, an important topic in machine learning and artificial intelligence \cite{Corbett2017}. Third, it would be interesting to extend the current study to the network setting with user interference. 

\section{Acknowledgement}
The major work is done during Yu's summer internship at ExP team at Microsoft Inc. in 2018. Partial supports are gratefully acknowledged from ARO grant
W911NF1710005, ONR grant N00014-16-1-2664, NSF grants
DMS-1613002, IIS 1741340, and CSoI under grant agreement CCF-0939370.

\bibliographystyle{ACM-Reference-Format}
\nocite{}
\tiny
\bibliography{lib_jiannl.bib,references3.bib}

\end{document}